\title{Opinion Optimization  in Directed Social Networks}
\author {
    Haoxin Sun,
    Zhongzhi Zhang}
\def\calG{\mathcal{G}}
\def\calF{\mathcal{F}}
\def\calR{\mathcal{R}}
\newcommand\DD{\boldsymbol{\mathit{D}}}
\newcommand\LL{\boldsymbol{\mathit{L}}}
\newcommand\II{\boldsymbol{\mathit{I}}}
\newcommand\sss{\boldsymbol{\mathit{s}}}
\renewcommand\AA{\boldsymbol{\mathit{A}}}
\newcommand\zz{\boldsymbol{\mathit{z}}}
\newtheorem{problem}{Problem}
\newtheorem{theorem}{Theorem}[section]
\newtheorem{lemma}[theorem]{Lemma}
\newenvironment{fminipage}%
{\begin{Sbox}\begin{minipage}}%
		{\end{minipage}\end{Sbox}\fbox{\TheSbox}}
\begin{document}

\maketitle

\begin{abstract}
Shifting social opinions has far-reaching implications in various aspects, such as public health campaigns, product marketing, and political candidates. In this paper, we study a problem of opinion optimization based on the popular Friedkin-Johnsen (FJ) model  for opinion dynamics in an unweighted directed social network with $n$ nodes and $m$ edges. In the FJ model, the internal opinion of every node lies in the closed interval $[0, 1]$, with 0 and 1 being polar opposites of opinions about a certain issue. Concretely, we focus on the problem of selecting a small number of $ k\ll n $ nodes and changing their internal opinions to 0, in order to minimize the average opinion at equilibrium. We then design an algorithm that returns the optimal solution to the problem in $O(n^3)$ time. To speed up the computation, we further develop a fast algorithm by sampling  spanning forests, the time complexity of which is $ O(ln) $, with $l$ being the number of samplings. Finally, we execute extensive experiments on various real directed networks, which show that the effectiveness of our two algorithms is similar to each other, both of which  outperform several baseline strategies of node selection. Moreover, our fast algorithm is more efficient than the first one, which is scalable to massive graphs with more than twenty million nodes.
\end{abstract}

\section{Introduction}

As an important part of our lives, online social networks and social media have dramatically changed the way people propagate, exchange, and formulate opinions~\cite{Le20}. Increasing evidence indicates that in contrast to traditional communications and interaction, in the current digital age online communications and discussions have significantly influenced human activity in an unprecedented way, leading to universality, criticality and complexity of information propagation~\cite{NoCaFlMaRa22}. In order to understand mechanisms for opinion propagation and shaping, a variety of mathematical models for opinion dynamics have been established~\cite{JiMiFrBu15,PrTe17,DoZhKoDiLi18,AnYe19}. Among different models, the Friedkin-Johnsen (FJ) model~\cite{FrJo90} is a popular one, which has been applied to many aspects~\cite{BeWaVaHoShAl21,FrPrTePa16}. For example, the concatenated FJ model has been recently adapted to capture and reproduce the complex dynamics behind the Paris Agreement negotiation process, which   explains why consensus was achieved in these multilateral international negotiations~\cite{BeWaVaHoShAl21}.

A fundamental quantity for opinion dynamics is  the overall opinion or average opinion, which reflects the public opinions about  certain topics of interest. In the past years, the subject of modifying opinions in a graph has attracted considerable attention in the scientific community~\cite{GiTeTs13,AbKlPaTs18, ChLiSo19,XuHuWu20}, since it has important implications in diverse realistic situations, such as commercial marketing, political election, and public health campaigns. For example, previous work has formulated and studied the problem of  optimizing the overall or  average opinion for the FJ model in undirected graphs by changing a certain attribute of some chosen nodes, including internal opinion~\cite{XuHuWu20}, external opinion~\cite{GiTeTs13}, and susceptibility to persuasion~\cite{AbKlPaTs18, ChLiSo19}, and so on. Thus far, most existing studies about modifying opinions focused on undirected graphs. In this paper, we study the problem of minimizing or maximizing average opinion in directed graphs (digraphs), since they can better mimic realistic networks. Moreover, because previous algorithms for unweighted graphs do not carry over to digraphs, we will propose an efficient linear-time approximation algorithm to solve the problem.


We adopt the discrete-time FJ model in a social network modeled by a digraph $\calG=(V,E)$ with $ n $ nodes and $ m $ arcs. In the model, each node $i\in V$ is endowed with an internal/innate opinion $s_i$ in the interval $[0,1]$, where 0 and 1 are two polar opposing opinions regarding a certain topic. Moreover, each node $i\in V$ has an expressed opinion $z_i(t)$ at time $t$. During the opinion evolution process, the internal opinions of all nodes never change, while the expressed opinion $z_i(t+1)$ of any node $i$ at time $t+1$ evolves as a weighted average of $s_i$ and the expressed opinions of $i$’s neighbors at time $t$. For sufficiently large $t$, the expressed opinion $z_i(t)$ of every node $i$ converges to an equilibrium opinion $z_i$. We address the following optimization problem \textsc{OpinionMin} (or \textsc{OpinionMax}): Given a digraph $\calG=(V,E)$ and a positive integer $k\ll n$, how to choose $k$ nodes and change their internal opinions to $0$ (or 1), so that the average overall steady-state opinion is minimized  (or maximized).  

The main contributions of our work are as follows. We formalize the problem \textsc{OpinionMin} (or
\textsc{OpinionMax}) of optimizing the average equilibrium opinion by optimally selecting $k$ nodes and modifying their internal opinions to $0$ (or 1), and show that  both problems are equivalent to each other. We prove that the \textsc{OpinionMin} problem has an optimal solution and give an exact algorithm, which returns  the optimal solution in $ O(n^3)$ time. We then provide an interpretation for the average equilibrium opinion from the perspective of  spanning converging forests, based on which and Wilson's algorithm we propose a sampling based fast algorithm. The fast algorithm has an error guarantee for the main quantity concerned, and has a time complexity of  $O(ln)$, where $l$ is the number of samplings. Finally, we perform extensive experiments on various real networks, which shows that our fast algorithm is almost as effective as the exact one, both outperforming several natural baselines. Furthermore, compared with the exact algorithm, our fast algorithm is more efficient, and scales to massive graphs with more than twenty million nodes.

\section{Related Work}

In this section, we briefly review the existing work related to ours.

Establishing mathematical models is a key step for understanding opinion dynamics and various models have been developed in the past years~\cite{JiMiFrBu15,PrTe17,DoZhKoDiLi18,AnYe19}. Among existing models, the FJ model~\cite{FrJo90} is a classic one, which is a significant extension of the DeGroot model~\cite{De74}. Due to its theoretical and practical significance, the FJ model has received much interest since its development. A sufficient condition for stability of the FJ model was obtained in~\cite{RaFrTeIs15}, its average innate opinion was inferred in~\cite{DaGoPaSa13}, and the vector of its expressed opinions at equilibrium was derived in~\cite{DaGoPaSa13,BiKlOr15}. Moreover, some explanations of the FJ model were also provided~\cite{GhSr14,BiKlOr15}. Finally, in recent years many variants or extensions of the FJ model have been introduced and studied by incorporating different factors affecting opinion formation, such as peer pressure~\cite{SeGrSqRa19}, cooperation and competition~\cite{HeZhLi20,XuHuWu20}, and interactions among higher-order nearest neighbors~\cite{ZhXuZhCh20}.

In addition to the properties, interpretations and extensions of the FJ model itself, some social phenomena have been quantified based on the FJ model, such as disagreement~\cite{MuMuTs18}, conflict~\cite{ChLiDe18}, polarization~\cite{MaTeTs17,MuMuTs18},  and controversy~\cite{ChLiDe18}, and a randomized algorithm approximately computing polarization and disagreement was designed in~\cite{XuBaZh21}, which was later used in~\cite{TuNe22}. 
Also, many optimization problems for these quantities in the FJ model have been proposed and analyzed, including minimizing polarization~\cite{MuMuTs18,MaTeTs17}, disagreement~\cite{MuMuTs18},  and conflict~\cite{ChLiDe18,ZhZh22}, by different strategies such as modifying node's internal opinions~\cite{MaTeTs17}, allocating edge weights~\cite{MuMuTs18} and adding edges~\cite{ZhBaZh21}. In order to solve these problems, different  algorithms were designed by leveraging some mathematical tools, such as semidefinite programming~\cite{ChLiDe18} and  Laplacian solvers~\cite{ZhBaZh21}.

Apart from polarization, disagreement, and conflict, another important optimization objective for opinion dynamics is the overall opinion or average opinion at equilibrium. For example, based on the FJ model, maximizing or minimizing the overall opinion has been considered by using different node-based schemes, such as changing the node's internal opinions~\cite{XuHuWu20}, external opinions~\cite{GiTeTs13}, and susceptibility to persuasion~\cite{AbKlPaTs18, ChLiSo19}. On the other hand, for the DeGroot model of opinion dynamics in the presence of leaders, optimizing the overall opinion or average opinion was also heavily studied~\cite{VaFaFr14, YiCaPa21,ZhZh21}. An identical problem was also considered for a vote model~\cite{MeAsDaAmAn13}, the asymptotic mean opinion of which is similar to that in the extended DeGroot model~\cite{YiCaPa21}. The vast majority of previous studies concentrated on unweighted graphs, with the exception of a few works~\cite{AhDeHaMaYa15, YiCaPa21}, which addressed opinion optimization problems in digraphs and developed approximation algorithms with the time complexity of at least $O(n^{2.373})$. In comparison, our fast algorithm is more efficient since it has linear time complexity.

\section{Preliminary}
This section is devoted to a brief introduction to some useful notations and tools, in order to facilitate the description of  problem formulation and  algorithms.

\subsection{Directed Graph and Its Laplacian Matrix}

Let $\calG=(V,E)$ denote an unweighted simple directed graph (digraph) with $n=|V|$ nodes (vertices) and $m=|E|$ directed edges (arcs), where $V=\{v_1,v_2,\cdots,v_n\}$ is the set of nodes, and $E=\{(v_i, v_j)\in V \times V \}$ is the set of directed edges. The existence of arc $(v_i,v_j) \in E$ means that there is an arc pointing from node $v_i$ to node $v_j$. In what follows, $v_i$ and $i$ are used interchangeably to represent node $v_i$ if incurring no confusion. An isolated node is a node with no arcs pointing to or coming from it.  Let $N(i) $ denote the set of nodes that can be accessed by node $ i $. In other words, $N(i) =\{ j: (i,j)\in E\}$. A path $P$ from node $v_1 $ to $ v_k $ is an alternating sequence of nodes and arcs $v_1$,$(v_1,v_2)$,$v_2$,$\cdots$, $v_{j-1},(v_{j-1}$,$v_j)$, $v_j$ in which nodes are distinct and every arc $ (v_i,v_{i+1}) $ is from $ v_i $ to $ v_{i+1}$. A loop  is a path plus an arc from the ending node to the starting node. A digraph is (strongly) connected if for any pair nodes $v_x$ and $v_y$, there is a path from $v_x$ to $v_y$, and there is a path from $v_y$ to $v_x$ at the same time. A digraph is called weakly connected if it is connected when one replaces any directed edge $(i,j)$ with two directed edges $(i,j)$ and $(j,i)$ in opposite directions. A tree is a weakly connected graph with no loops. An isolated node is considered as a tree. A forest is a particular graph that is a disjoint union of trees.

The connections of digraph $\calG=(V,E)$ are encoded in its adjacency matrix $\AA=(a_{ij})_{n\times n}$, with the element $a_{ij}$ at row $i$ and column $j$ being $ 1 $  if $ (v_i,v_j) \in E $ and $a_{ij} = 0 $ otherwise. For a node $i$ in digraph $ \calG $, its  in-degree $d^+_i$ is defined as $d^+_i=\sum_{j=1}^n a_{ji}$, and its out-degree $d^-_i$ is defined as $d^-_i=\sum_{j=1}^n a_{ij}$. In the sequel, we use $d_i$ to represent the out-degree $d_i^-$. The diagonal out-degree matrix of digraph $\calG$ is defined as ${\DD} = {\rm diag}(d_1, d_2, \ldots, d_n)$, and the Laplacian matrix of digraph $\calG$ is defined to be ${\LL}={\DD}-{\AA}$. Let $\mathbf{1}$ and $\mathbf{0}$ be the two $n$-dimensional vectors with all entries being ones and zeros, respectively. Then, by definition, the sum of all entries in each row of $\LL$ is equal to $0$ obeying  $\LL\mathbf{1}=\mathbf{0}$. Let $\II$ be the $n$-dimensional identity matrix.

In a digraph $\calG$, if for any arc $(i,j)$, the arc $(j,i)$ exists, $\calG$ is reduced to an undirected graph. When $\calG $ is undirected, $a_{ij}= a_{ji}$ holds for an arbitrary pair of nodes $i$ and $j$, and thus $d^+_i= d^-_i$ holds for any  node $i\in V$. Moreover, in undirected graph $\calG $ both adjacency matrix $\AA$ and Laplacian matrix $\LL$ of  $\calG $ are symmetric, satisfying  $\LL\mathbf{1}=\mathbf{0}$.

\subsection{Friedkin-Johnsen Model on Digraphs}

The Friedkin-Johnsen (FJ) model~\cite{FrJo90} is a popular model  for opinion evolution and formation. For the FJ opinion model on a digraph $\calG=(V,E)$, each node/agent  $i\in V$ is associated with two opinions: one is the internal opinion $s_i$, the other is the expressed opinion $z_i(t)$ at time $t$. The internal opinion $s_i$ is in the closed interval $[0, 1]$, reflecting the intrinsic position of node $i $ on a certain topic, where 0 and 1 are polar opposites of opinions regarding the topic. A higher value of $s_i$ signifies that node $i$ is more favorable toward the topic, and vice versa. During the process of opinion evolution, the internal opinion $s_i$ remains constant, while the expressed opinion $z_i(t)$ evolves at time $ t+1$ as follows:
\begin{equation}\label{FJ}
z_i(t+1) = \frac{s_i +\sum_{j\in N(i)} a_{ij}z_j(t)}{1+\sum_{j\in N(i)} a_{ij}}.
\end{equation}
 Let $ \sss = (s_1,s_2,\cdots,s_n)^\top$ denote the vector of internal opinions, and let $ \zz(t) = (z_1(t),z_2(t),\cdots,z_n(t))^\top $ denote the vector of expressed opinions at time $ t $. 
\begin{lemma}\cite{BiKlOr15}\label{le-z}
	As $ t $ approaches infinity, $ \zz(t) $ converges to an equilibrium vector  $ \zz = (z_1,z_2,\cdots,z_n)^\top$ satisfying $ \zz = (\II+\LL)^{-1}\sss $.
\end{lemma}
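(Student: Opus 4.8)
The plan is to recast the coordinatewise recursion~\eqref{FJ} as a single affine matrix iteration and then invoke a spectral-radius bound to obtain, in one stroke, both the invertibility of $\II+\LL$ and the convergence of $\zz(t)$. First I would observe that, since $\sum_{j\in N(i)}a_{ij}=d_i$ and $(\II+\DD)^{-1}$ is the diagonal matrix with entries $1/(1+d_i)$, the update~\eqref{FJ} reads in matrix form as
\begin{equation*}
\zz(t+1) = (\II+\DD)^{-1}\sss + (\II+\DD)^{-1}\AA\,\zz(t).
\end{equation*}
Writing $\boldsymbol{B}=(\II+\DD)^{-1}\AA$ and $\boldsymbol{c}=(\II+\DD)^{-1}\sss$, the dynamics become the affine iteration $\zz(t+1)=\boldsymbol{B}\,\zz(t)+\boldsymbol{c}$.

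The key step is to bound the spectral radius of $\boldsymbol{B}$. This matrix is entrywise nonnegative, and its $i$-th row sum equals $\sum_j a_{ij}/(1+d_i)=d_i/(1+d_i)<1$ for every $i$ (including isolated nodes, where the row sum is $0$). Hence the maximum absolute row sum, i.e.\ the induced $\infty$-norm, satisfies $\|\boldsymbol{B}\|_\infty=\max_i d_i/(1+d_i)<1$, and since the spectral radius is dominated by any induced norm, $\rho(\boldsymbol{B})\le\|\boldsymbol{B}\|_\infty<1$.

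With $\rho(\boldsymbol{B})<1$ in hand, the remaining steps are routine. The matrix $\II-\boldsymbol{B}$ is invertible, and the factorization $\II+\LL=(\II+\DD)-\AA=(\II+\DD)(\II-\boldsymbol{B})$ exhibits $\II+\LL$ as a product of two invertible matrices, so $(\II+\LL)^{-1}$ exists; solving the fixed-point equation $\zz=\boldsymbol{B}\,\zz+\boldsymbol{c}$ then gives $\zz=(\II-\boldsymbol{B})^{-1}(\II+\DD)^{-1}\sss=(\II+\LL)^{-1}\sss$. Finally, subtracting this fixed-point relation from the iteration yields the error recursion $\zz(t+1)-\zz=\boldsymbol{B}(\zz(t)-\zz)$, hence $\zz(t)-\zz=\boldsymbol{B}^t(\zz(0)-\zz)$; because $\rho(\boldsymbol{B})<1$ forces $\boldsymbol{B}^t\to\boldsymbol{0}$, we conclude $\zz(t)\to\zz$ for every initial condition.

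I expect no serious obstacle: the whole argument hinges on the uniform row-sum bound $d_i/(1+d_i)<1$, which automatically accommodates the directed and possibly disconnected setting that usually complicates Laplacian arguments (one never needs symmetry of $\LL$ or strong connectivity of $\calG$). The only point requiring a little care is the nonsymmetry of $\boldsymbol{B}$, which prevents a clean diagonalization/eigenvalue estimate and is precisely why I route the spectral bound through the induced $\infty$-norm rather than through the eigenvalues directly.
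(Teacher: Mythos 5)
Your proof is correct. Note that the paper itself offers no proof of this lemma --- it is imported by citation from Bindel, Kleinberg and Oren --- so there is nothing internal to compare against; the cited source treats the undirected case, whereas your argument is written directly for the directed, possibly disconnected setting of this paper, which is exactly what is needed here. Every step checks out: the matrix form $\zz(t+1)=(\II+\DD)^{-1}\sss+(\II+\DD)^{-1}\AA\,\zz(t)$ follows from $\sum_{j\in N(i)}a_{ij}=d_i$, the row sums of $\boldsymbol{B}=(\II+\DD)^{-1}\AA$ are $d_i/(1+d_i)\le (n-1)/n<1$ so $\rho(\boldsymbol{B})\le\|\boldsymbol{B}\|_\infty<1$, the factorization $\II+\LL=(\II+\DD)(\II-\boldsymbol{B})$ gives invertibility, and the error recursion gives convergence from any initial condition. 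Your closing remark is also apt: routing the spectral bound through the induced $\infty$-norm is precisely what makes the argument immune to the nonsymmetry of $\LL$ and to the absence of strong connectivity.
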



Let $\omega_{ij}$ be the element at the $i$-th row and the $j$-th column of matrix $\mathbf{\Omega} \triangleq \left(\II+\LL\right)^{-1}$, which is called the fundamental matrix of the FJ model for opinion dynamics~\cite{GiTeTs13}. The fundamental matrix has many good properties~\cite{ChSh97,ChSh98}. It is row stochastic, since $\sum_{j=1}^n \omega_{ij}=1$. Moreover, $ 0\leq\omega_{ji}< \omega_{ii}\leq 1$ for any pair of nodes $ i$ and $j $. The equality $ \omega_{ji} = 0$ holds if and only if $j\neq i$ and there is no path from node $ j $ to node $ i $; and $ \omega_{ii} = 1$ holds if and only if the out-degree $d_i$ of nodes $i$ is 0. Then, according to Lemma~\ref{le-z}, for every node $i \in V$, its expressed opinion $z_i$ is given by $z_i=\sum^n_{j=1}  \omega_{ij}s_j$, a convex combination of the internal opinions for all nodes. 

\section{Problem Formulation}

An important quantity for opinion dynamics is the overall expressed opinion or the average expressed opinion at equilibrium, the optimization problem for which on the FJ model has been addressed under different constraints~\cite{GiTeTs13,AhDeHaMaYa15,AbKlPaTs18,XuHuWu20,YiCaPa21}. In this section, we propose a problem of minimizing average expressed opinion for the FJ opinion dynamics model in a digraph, and design an exact algorithm optimally solving the problem.

\subsection{Average  Opinion and Structure Centrality}

For the FJ model in digraph $\calG= (V,E)$, the overall expressed opinion is defined as the sum $z_{\rm sum}$ of expressed opinions $z_i$ of every node $i \in V$  at equilibrium. By Lemma~\ref{le-z}, $z_i=\sum^n_{j=1}  \omega_{ij}s_j$ and $ z_{\rm sum}=\sum_{i=1}^n z_i =\sum_{i=1}^n \sum_{j=1}^n\omega_{ji} s_i $. Given the vector for the equilibrium expressed opinions $\zz$, we use $g(\zz)$ to denote the average expressed opinion. By definition, 
\begin{equation}\label{g}
g(\zz) =\frac{1}{n} z_{\rm sum}=\frac{1}{n}\sum_{i=1}^n z_i =\sum_{i=1}^n\frac{\sum_{j=1}^n\omega_{ji}}{n}s_i\,.
\end{equation}
Since $g(\zz) = z_{\rm sum}/n$, related problems and algorithms for $g(\zz)$ and $z_{\rm sum}$ are equivalent to each other. In what follows, we focus on the quantity $g(\zz)$.

Equation~\eqref{g} tells us that the average expressed opinion $g(\zz)$ is determined by two aspects: the internal opinion of every node, as well as the network structure characterizing interactions between nodes encoded in matrix $\mathbf{\Omega}$, both of which constitute the social structure of opinion system for the FJ model. The former is an intrinsic property of each node, while the latter is a structure property of the network, both of which together determine the opinion dynamics system. Concretely, for the equilibrium expressed opinion $z_i=\sum_{j=1}^n \omega _{ij} s_j$ of node $i$, $\omega _{ij}$ indicates the convex combination coefficient or contribution of the internal opinion for node $j$.  And the average of the $j$-th column elements of $\mathbf{\Omega}$, denoted by $\rho_j \triangleq \frac{1}{n}\sum_{i=1}^n\omega_{ij}$, measures the contribution of the internal opinion of node $j$ to $g(\zz)$. We call $\rho_j$ as the structure centrality~\cite{Fr11} of node $j$ in opinion dynamics modelled by the FJ model, since it catches the long-run structure influence of node $j$ on the average expressed opinion. 
Note that matrix $\mathbf{\Omega}$ is row stochastic and $0\leq \omega_{ij} \leq 1$ for any pair of nodes $i$ and $j$, $ 0\leq \rho_j\leq 1 $ holds for every node $ j\in V$, and $\sum_{j=1}^n\rho_j=1$.

Using structure centrality, the average expressed opinion $g(\zz) $ is expressed as $g(\zz) = \sum_{i=1}^n\rho_{i}s_i$, which shows that the average expressed opinion $g(\zz) $ is a convex combination of the internal opinions of all nodes, with the weight for $s_i$ being the structure centrality $\rho_{i}$ of node $i$.



\subsection{Problem Statement}

As shown above, for a given  digraph $\calG= (V,E)$, its node centrality remains fixed. For the FJ model on $\calG= (V,E)$ with initial vector $ \sss = (s_1,s_2,\cdots,s_n)^\top$ of internal opinions, if we choose a set $T\subset V $ of $ k $ nodes and persuade them to change their internal opinions to 0, the average equilibrium opinion, denoted by $g_T(\zz)$, will decrease. It is clear that for $T=\emptyset$, $g_{\emptyset}(\zz)=g(\zz)$. Moreover, for two node sets $H$ and $T$, if $ T \subset H\subset V $, then $g_T(\zz)\geq g_H(\zz)$. Then the problem \textsc{OpinionMin} of opinion minimization arises naturally:  How to optimally select a set $T$ with a small number of $k$ nodes and change their internal opinions to 0, so that their influence on the overall equilibrium opinion is maximized. Mathematically, it is formally stated as follows.

\begin{problem}[OpinionMin]\label{Pr-IOMi}
	Given a digraph $ \calG = (V,E) $, a vector $ \sss $ of internal opinions, and an integer $ k\ll n $, we aim to find the set $ T \subseteq V $ with $ |T| = k $ nodes, and change the internal opinions of these chosen $ k $ nodes to $ 0 $, so that the average equilibrium opinions is minimized. That is,
	\begin{equation}\label{IOMi}
	T =  \arg \min_{U \subseteq V, |U|= k} g_U(\zz).
	\end{equation}
\end{problem}

Similarly, we can define the problem \textsc{OpinionMax} for maximizing the average equilibrium opinion by optimally selecting a set $T$ of $ k $ nodes and changing their internal opinions to 1. The goal of problem \textsc{OpinionMin} is to drive the average equilibrium opinion $g_T(\zz)$ towards the polar value 0, while the of goal of problem \textsc{OpinionMax} is to drive $g_T(\zz)$ towards polar value 1. Although the definitions and formulations of problems \textsc{OpinionMin} and \textsc{OpinionMax} are different, we can prove that they are equivalent to each other. 
In the sequel, we only consider the \textsc{OpinionMin} problem.

\subsection{Optimal Solution}

Although the \textsc{OpinionMin} problem is seemingly combinatorial, we next show that there is an exact algorithm optimally solving the problem in $ O(n^3) $ time.
\begin{theorem}\label{Th-IOMi}
	The optimal solution to the \textsc{OpinionMin} problem is the set $T$ of $k$ nodes with the largest product  of structure centrality  and internal opinion. That is, for any node $i \in T$ and any node $j \in V \setminus T$,  $ \rho_i s_i \geq  \rho_j s_j$. 
\end{theorem}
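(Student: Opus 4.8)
The plan is to reduce the seemingly combinatorial minimization to a simple linear selection problem, by exploiting the fact that the structure centrality $\rho_j = \frac{1}{n}\sum_{i=1}^n \omega_{ij}$ depends only on the fundamental matrix $\mathbf{\Omega} = (\II+\LL)^{-1}$, and hence only on the topology of $\calG$, never on the internal opinions. Zeroing the opinions of a chosen set changes the opinion vector but leaves every $\rho_j$ intact, and this is what makes the objective tractable.

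First I would fix an arbitrary candidate set $T \subseteq V$ with $|T| = k$ and write the modified internal opinion vector $\tilde{\sss}$, defined by $\tilde{s}_i = 0$ for $i \in T$ and $\tilde{s}_i = s_i$ otherwise. Since $\mathbf{\Omega}$ and thus each $\rho_j$ is untouched, the identity $g(\zz) = \sum_{i=1}^n \rho_i s_i$ from the previous subsection applies verbatim to $\tilde{\sss}$, yielding
\[
g_T(\zz) = \sum_{i=1}^n \rho_i \tilde{s}_i = \sum_{i \in V\setminus T} \rho_i s_i = g(\zz) - \sum_{i \in T} \rho_i s_i .
\]
Because $g(\zz)$ is a constant independent of the choice of $T$, minimizing $g_T(\zz)$ over all $k$-subsets is equivalent to maximizing the gain $\sum_{i \in T} \rho_i s_i$, a separable objective with one nonnegative weight $\rho_i s_i$ per node.

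Next I would argue that this gain is maximized precisely by a top-$k$ selection of the weights $\rho_i s_i$, and translate the optimality condition into the inequality stated in the theorem via a standard exchange argument. Suppose $T$ is optimal yet there exist $i \in T$ and $j \in V\setminus T$ with $\rho_i s_i < \rho_j s_j$. Then the swapped set $T' = (T\setminus\{i\})\cup\{j\}$ still has size $k$ and satisfies $\sum_{\ell \in T'}\rho_\ell s_\ell > \sum_{\ell \in T}\rho_\ell s_\ell$, so $g_{T'}(\zz) < g_T(\zz)$, contradicting the optimality of $T$. Hence $\rho_i s_i \geq \rho_j s_j$ for every $i \in T$ and every $j \in V\setminus T$, which is exactly the claim; existence of an optimal set follows since the feasible collection of $k$-subsets is finite.

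The decisive step, and essentially the only nontrivial one, is the first observation: that setting internal opinions to zero does not perturb the structure centralities. This linearizes the objective and collapses the apparently combinatorial problem into a sorting task. Everything afterward is the textbook greedy argument for choosing the $k$ largest entries of a fixed list of weights, so I anticipate no real difficulty beyond stating it carefully.
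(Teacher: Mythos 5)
Your proposal is correct and follows essentially the same route as the paper: both rest on the single observation that zeroing internal opinions leaves every structure centrality $\rho_i$ unchanged, so $g_T(\zz)=\sum_{i\notin T}\rho_i s_i$ and the problem reduces to maximizing $\sum_{i\in T}\rho_i s_i$, i.e., a top-$k$ selection by $\rho_i s_i$. The only difference is that you spell out the final exchange argument, which the paper leaves implicit.
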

\begin{proof}
	Since the modifying of the internal opinions does not change the structure centrality $\rho_{i}$ of any node $i$, the optimal set $T$ of nodes for the \textsc{OpinionMin} problem satisfies
	\begin{equation*}
	\begin{aligned}
	T 
	= \arg \min_{U \subseteq V, |U|= k} \sum_{i \notin U}\rho_{i}s_i
	= \arg \max_{U \subseteq V, |U|= k} \sum_{i \in U}\rho_{i}s_i,
	\end{aligned}	
	\end{equation*}
	which finishes the proof.
\end{proof}

Theorem~\ref{Th-IOMi} shows that the key to solve \textsc{Problem}~\ref{Pr-IOMi} is to compute $\rho_i$ for every node $i$. In Algorithm~\ref{al-optimal}, we present an algorithm \textsc{Exact}, which computes $\rho_i$ exactly. The algorithm first computes the inverse $\mathbf{\Omega}$ of matrix $\II+\LL$, which takes $O(n^3)$ time. Based on the obtained $\mathbf{\Omega}=(\omega_{ij})_{n\times n}$, the algorithm then computes $\rho_is_i$ for each $i \in V $ in $ O(n^2) $ time, by using the relation $\rho_is_i =\frac{1}{n}\sum_{j=1}^n\omega_{ji}s_i $. Finally, Algorithm~\ref{al-optimal} constructs the set $T$ of $k$ nodes with the largest value of $\rho_is_i$, which takes $ O(kn) $ time. Therefore, the total time complexity of Algorithm~\ref{al-optimal} is $ O(n^3) $.  

Due to the high computation complexity, Algorithm~\ref{al-optimal} is computationally infeasible for large graphs. In the next section, we will give a fast algorithm for \textsc{Problem}~\ref{Pr-IOMi}, which is scalable to graphs with twenty million nodes.

\begin{small}
\begin{algorithm}[tb]
	\caption{\textsc{Exact}$(\calG,\sss, k)$}
	\label{al-optimal}
	\Input{
		A digraph $\calG=(V,E)$; an internal opinion vector $ \sss $; an integer $k$ obeying relation $1 \leq k \leq |V|$\\
	}
	\Output{
		$T$: A subset of $V$ with $|T|=k$
	}
	Initialize solution $T= \emptyset$ \;
	Compute $\mathbf{\Omega} = (\II+\LL)^{-1}$ \;
	Compute $\rho_is_i =\frac{1}{n}\sum_{j=1}^n\omega_{ji}s_i$ for each $i \in V $\;
	\For{$ t=1 $ to $ k $}{
		Select $i$ s. t.  $i \gets \mathrm{arg\, max}_{i \in V \setminus T} \rho_is_i$ \;
		Update solution $T \gets T\cup \{ i \}$ \;	
	}	
	\Return $T$.
\end{algorithm}
\end{small}


\section{Fast Sampling Algorithm}

In this section, we develop a linear time algorithm to approximately evaluate the structure centrality of every node and solve the \textsc{OpinionMin} problem by using the  connection of the fundamental matrix $\mathbf{\Omega}$ and the  spanning converging forest.  Our fast algorithm is based on the sampling of  spanning converging forests, the ingredient of which is an extention of  Wilson’s algorithm~\cite{Wi96,WiPr96}.

\subsection{Interpretation of Structure Centrality } 


For a digraph $\calG=(V,E)$, a spanning subgraph of $\calG$  is a subgraph of  $\calG$ with node set being $V$ and edge set being a subset of $E$. A converging tree is a weakly connected digraph, where one node, called the root node, has out-degree 0 and all other nodes have out-degree 1. An isolated node is considered as a converging tree with the root being itself. A spanning converging forest of digraph $\calG$ is a spanning subdigraph of $\calG$, where all weakly connected components are converging trees. A spanning converging forest is in fact an in-forest in~\cite{AgCh01,ChAg02}. 

Let $\calF $ be the set of all spanning converging forests of digraph $ \calG $. For a spanning converging forest $\phi \in \mathcal{F}$, let $ V_{\phi} $ and $E_{\phi} $ denote its node set and arc set, respectively. By definition, for each node $ i \in V_{\phi} $, there is at most one node $ j \in V_{\phi} $ obeying $ (i,j) \in E_{\phi}$. For a spanning converging forest $\phi$, define $\mathcal{R}(\phi ) = \{i:(i,j) \notin \phi, \forall j \in V_{\phi} \}$, which is actually the set of roots of all converging trees that constitute $ \phi$. Since each node $i$ in $\phi$ belongs to a certain converging tree, we define function $r_{\phi}(i): V \rightarrow \calR(\phi) $ to map node $i$ to the root of the converging tree including $i$. Thus, if $ r_{\phi}(i) = j $ we conclude that $ j\in \calR(\phi) $, and nodes $ i$ and $j $ belong to the same converging tree in $ \phi $. Define $ \calF_{ij} $ to be the set of those spanning converging forests, where for each spanning converging forest nodes $ i$ and $j $ are in the same converging tree rooted at node $ j $. That is, $\calF_{ij} = \{\phi: r_{\phi}(i) = j, \phi \in \calF\}$. Then, we have $\calF_{ii} = \{\phi: i\in \calR(\phi), \phi \in \calF\}$.



Spanning converging forests have a close connection with the fundamental matrix of the FJ model, which is in fact the in-forest matrix of a digraph $\calG$ introduced~\cite{ChSh97,ChSh98}. Using the approach in~\cite{Ch82}, it is easy to derive that the entry $\omega_{ij}$ of the fundamental matrix $\mathbf{\Omega}$ can be written as $\omega_{ij}= |\calF_{ij}|/|\calF|$.



With the notions mentioned above, we now provide an interpretation and another expression of structure centrality $\rho_i$ for any node $i$. For the convenience of description, we introduce some more notations. For a node $i \in V$ and a  spanning converging forest $\phi \in \calF $ of digraph $\calG= (V,E)$, let $M(\phi,i) $ be a set defined by $M(\phi,i) = \{ j: r_\phi(j) = i \}$. By definition, for any  $\phi \in \calF $, if $ i \notin \calR(\phi) $, $ M(\phi,i) = \emptyset $; if $ i \in \calR(\phi) $, $ |M(\phi,i)| $ is equal to the number of nodes in the converging tree in $\phi$, whose root is node $i$. For two nodes $i$ and $j$ and a  spanning converging forest $\phi$, define $s(\phi,j,i)$ as a function taking two values, 0 or 1:
\begin{equation}
s(\phi,j,i) = \begin{cases}
1 & \text{ if } r_{\phi}(j)= i,\\
0 & \text{ if } r_{\phi}(j)\neq i.
\end{cases}
\end{equation}
Then, the structure centrality $\rho_i $ of node $i$  is recast as
\begin{align}\label{rhonew}
\rho_i &= \frac{1}{n}\sum_{j=1}^n\omega_{ji}  = \frac{1}{n|\calF|}\sum_{j=1}^n  |\calF_{ji}|
= \frac{1}{n|\calF|}\sum\limits_{j=1}^{n}\sum\limits_{\phi\in \calF}s(\phi,j,i)	\notag\\ 
& = \frac{1}{n|\calF|}\sum\limits_{\phi\in \calF}\sum\limits_{j=1}^{n}s(\phi,j,i)
= \frac{1}{n|\calF|}\sum\limits_{\phi\in \calF}|M(\phi,i)|,
\end{align}
which indicates that $\rho_i$ is the average number of nodes  in the converging trees rooted at node $i$ in all $\phi\in \calF$, divided by $n$.

\subsection{An Expansion of Wilson's Algorithm}
%
%

We first give a brief introduction to the loop-erasure operation to a random walk~\cite{La80}, which is a process obtained from the random walk by performing an erasure operation on its loops in chronological order. Concretely, for a random walk $P=v_1,(v_1,v_2),v_2,\ldots,v_{k-1},(v_{j-1},v_j),v_j$, 
the loop-erasure $P_{\rm LE}$ to $P$ is an alternating
sequence  $\widetilde{v}_1,(\widetilde{v}_1, \widetilde{v}_2), \widetilde{v}_2\ldots, \widetilde{v}_{q-1},(\widetilde{v}_{q-1}, \widetilde{v}_q),\widetilde{v}_q$ of nodes and arcs  obtained inductively in the following way. First set $\widetilde{v}_1= v_1$ and append $ \widetilde{v}_1$ to $P_{\rm LE}$. Suppose that sequence $\widetilde{v}_1$, $(\widetilde{v}_1, \widetilde{v}_2)$, $\widetilde{v}_2$, $\ldots$, $\widetilde{v}_{h-1}$, $(\widetilde{v}_{h-1},\widetilde{v}_h)$, $\widetilde{v}_h$ has been added to $P_{\rm LE}$ for some $h\geq 1$. If $\widetilde{v}_h=v_j$, then $q= h$ and $\widetilde{v}_h$ is the last node in $P_{\rm LE}$. Otherwise, define $ \widetilde{v}_{h+1}= v_{r+1}$, where $ r = \max\{i:v_i = \widetilde{v }_h \}$.



Based on the loop-erasure operation on a random walk, Wilson proposed an algorithm to generate a uniform spanning tree rooted at a given node~\cite{Wi96,WiPr96}. Following the three steps below, we introduce  Wilson's algorithm to get a spanning tree $\tau=( V_{\tau}, E_{\tau})$ of a connected digraph $\calG= (V,E)$, which is rooted at node $u$. (i) Set $ \tau  =(\{u\},\emptyset) $ with $ V_{\tau}  = \{u\} $. Choose $ i \in V \setminus V_{\tau} $. Then create an unbiased random walk starting at node $i$. At each time step, the walk jumps to a neighbor of current position with identical probability. The walk stops, when the whole walk $P$ reaches some node in $\tau $. (ii) Perform loop-erasure operation on the random walk $P$ to get $P_{\rm LE}=\widetilde{v}_1$, $(\widetilde{v}_1,\widetilde{v}_2)$, $\widetilde{v}_2$, $\ldots$, $(\widetilde{v}_{q-1},\widetilde{v}_q)$, $\widetilde{v}_q$, and add the nodes and arcs in $P_{\rm LE}$ to $\tau$. Then update $V_{\tau} $ with the nodes in $ \tau $. (iii) If $V_{\tau}  \neq V $, repeat step (ii), otherwise end circulation and return $\tau$.

For a digraph $\calG= (V,E)$, connected or disconnected, we can also apply Wilson's Algorithm to get a spanning converging forest $\phi_0 \in \calF$, by using the method similar to that in~\cite{AvLuGaAl18,PiAmBaTr21}, which includes the following three steps. (i) We construct an augmented digraph $\calG'=(V',E') $ of $\calG= (V,E)$, obtained from $\calG= (V,E)$ by adding a new node $\Delta$ and some new edges. Concretely, in $\calG'=(V',E')$, $V' =V \cup \{\Delta\}$ and $E'= E\cup \{(i,\Delta)\} \cup \{(\Delta,i)\}$ for all $i\in V$. (ii) Using Wilson's algorithm to generate a uniform spanning tree $\tau$ for the augmented graph $\calG'$, whose root node is $\Delta$. (iii) Deleting all the edges $(i,\Delta) \in \tau$ we get a spanning forest $\phi_0 \in \calF$ of $\calG$. Assigning $\calR(\phi_0) = \{ i:(i,\Delta)\in \tau \}$ as the set of roots for trees $ \phi_0$ makes $\phi_0$ become a converging   spanning forest of $\calG$.

The spanning converging forest $\phi_0$ obtained using the above steps is uniformly selected from $\calF$~\cite{AvLuGaAl18}. In other words, for any spanning converging forest $\phi $ in $\calF$, we have $\mathbb{P}(\phi_0 =\phi) = 1/|\calF|$. Following the three steps above for generating a uniform spanning converging forest of digraph $\calG$, in Algorithm 2 we present an algorithm to generate  a uniform spanning converging forest $\phi$ of digraph $\calG$, which returns a list RootIndex with the $i$-th element RootIndex[$i$]  recording the root of the tree in $\phi$ node $ i $ belongs to.  That is, RootIndex[$i$]=$r_{\phi}(i)$. 

\begin{small}
\begin{algorithm}
	\caption{$\textsc{RandomForest}(\calG)$}
	\label{alg-rf}
	\Input{ $\calG$ : a digraph  \\
	}
	\Output{RootIndex : a vector recording the root index of every node }
	InForest[$ i $] $ \leftarrow $ false , $i= 1,2,\ldots,n $\;
	Next[$ i $] $ \leftarrow -1$ , $i= 1,2,\ldots,n $\;
	RootIndex[$i$] $ \leftarrow 0 $, $i= 1,2,\ldots,n $\;
	\For{$ i = 1 $ to $ n $}
	{$ u \leftarrow i $\; 
		\While{not InForest[$ u $]}{
			seed $ \leftarrow $ \textsc{Rand}()  \; 
			\If{seed  $\leq  \frac{1}{1+d_u} $}{
				InForest[$ u $] $ \leftarrow $ true\;
				Next[$ u $]$ \leftarrow -1 $\;
				RootIndex[$ u $] $ \leftarrow u $\;
			}
			\Else{
				Next[$ u $] $ \leftarrow $ \textsc{RandomSuccessor}($ u,\calG $)\; 
				u $ \leftarrow $ Next[$ u $]\;
			}
		}
		RootNow $ \leftarrow $ 	RootIndex[$ u $] \;
		$ u\leftarrow i $\;
		\While{not InForest[$ u $]}{
			InForest[$ u $] $ \leftarrow $ true\;
			RootIndex[$ u $] $ \leftarrow $ RootNow\;
			u $ \leftarrow $ Next[$ u $]\;
		}
	}
	
	\textbf{return} RootIndex \;
\end{algorithm}
\end{small}

 Below we give a detailed description for Algorithm~\ref{alg-rf}. InForest is a list recording whether a node is in the forest or not in the random walk process. In line 1, we initialize InForest$ [i] $ to false, for all $ i\in V$. If node $i$ is not a root of any tree in the forest $\phi$, Next[$i$] is the node $j$ satisfying $(i,j) \in \phi$; if node $i$ belongs to the root set $\calR(\phi) $, Next[$ i $] $ = -1$. We initialize Next[$ i $] $= -1$ in line 2. We start a random walk at node $u$ in the extended graph $\calG'$ in line 5 to create a forest branch of $\calG$. The probability of visiting node $\Delta$ starting from $u$ is $1/(1+d_u)$. In line 7, we generate a random real number in $(0,1)$ using function \textsc{Rand}(). If the random number satisfies the inequality in line 8, the walk jumps to node $\Delta$ at this step. According to the previous analysis, in extended graph $ \calG'$, those nodes that point directly to $  \Delta $  belong to the root set $ \calR(\phi) $. 
 In lines 9-11, we set the node $u$ as a root node and update InForest[$u$], Next$[u]$, and RootIndex$[u] $. If the inequality in line 8 does not hold, we use function \textsc{RandomSuccessor}($u,\calG$) to return a node randomly selected from the neighbors of $ u $ in $\calG $ in line 13. Then we update $ u $ to Next[$u $] and go back to line 6. The $\mathbf {for}$ loop stops when the random walk goes to a node already existing in the forest. When the loop stops, we get a newly created branch. In lines 15-20, we add the loop-erasure of the branch to the forest and then update RootIndex.

We now analyze the time complexity of Algorithm~\ref{alg-rf}. Before doing this, we present some properties of the diagonal element $\omega_{ii}$ of matrix $\mathbf{\Omega}$ for all nodes $i \in V$.
\begin{lemma}\label{le-omega}
	For any $i=1,2,\ldots, n$, the diagonal element $\omega_{ii}$ of matrix $\mathbf{\Omega}$ sastisfies $\frac{1}{1+d_i}\leq \omega_{ii} \leq \frac{2}{2+d_i}$.
\end{lemma}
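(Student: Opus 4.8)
The plan is to bypass the combinatorial/forest interpretation entirely and argue directly from the defining identity $(\II+\LL)\mathbf{\Omega}=\II$, together with the elementary entrywise facts $\omega_{vi}\ge 0$ and $\omega_{vi}\le\omega_{ii}$ already recorded above. First I would read off two scalar equations from $(\II+\LL)\mathbf{\Omega}=\II$. Since $(\II+\LL)_{ii}=1+d_i$ and $(\II+\LL)_{uv}=-a_{uv}$ for $u\ne v$, the $(i,i)$ entry yields
\[
(1+d_i)\,\omega_{ii}=1+\sum_{v\in N(i)}\omega_{vi},
\]
while for every $u\ne i$ the $(u,i)$ entry yields the mean-value relation
\[
(1+d_u)\,\omega_{ui}=\sum_{v\in N(u)}\omega_{vi}.
\]
The lower bound is then immediate: all $\omega_{vi}\ge 0$, so the first identity gives $(1+d_i)\omega_{ii}\ge 1$, i.e.\ $\omega_{ii}\ge \frac{1}{1+d_i}$.

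For the upper bound I would reduce everything to the single claim that $\omega_{vi}\le \tfrac12\,\omega_{ii}$ for every $v\ne i$. Granting this, the first identity and $|N(i)|=d_i$ give $(1+d_i)\omega_{ii}\le 1+d_i\cdot\tfrac12\omega_{ii}$, and rearranging $\omega_{ii}\bigl(1+\tfrac{d_i}{2}\bigr)\le 1$ produces exactly $\omega_{ii}\le \frac{2}{2+d_i}$. Thus the whole content of the upper bound sits in that one claim.

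To prove the claim I would run a discrete maximum principle on the column vector $(\omega_{ui})_u$. Let $m^\star=\max_{v\ne i}\omega_{vi}$, attained at some $v^\star\ne i$, and apply the mean-value relation at $v^\star$. The key structural input is that $\calG$ is a simple digraph, so $v^\star$ has at most one out-neighbor equal to the distinguished node $i$; that single neighbor contributes at most $\omega_{ii}$ to $\sum_{v\in N(v^\star)}\omega_{vi}$, while each of the remaining $\le d_{v^\star}-1$ neighbors contributes at most $m^\star$. Substituting into $(1+d_{v^\star})m^\star=\sum_{v\in N(v^\star)}\omega_{vi}$ and cancelling the common $d_{v^\star}m^\star$ leaves $2m^\star\le\omega_{ii}$ (the subcase $i\notin N(v^\star)$ forces $m^\star\le 0$ and is trivial, as is a degenerate maximizer with $d_{v^\star}=0$, for which $\omega_{v^\star i}=0$ outright). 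This is the step I expect to be the crux, since the factor $\tfrac12$ — and hence the precise bound $\frac{2}{2+d_i}$ — comes solely from the fact that the ``boundary value'' $\omega_{ii}$ can be reached through at most one out-neighbor of any node; the directed $2$-cycle, where $\omega_{ii}=\tfrac23=\frac{2}{2+d_i}$, confirms the estimate is tight.
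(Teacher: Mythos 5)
Your proof is correct and complete. The two scalar identities you extract from $(\II+\LL)\mathbf{\Omega}=\II$ are right (using that $\calG$ is simple, so $i\notin N(i)$ and each node has at most one arc to $i$), the lower bound follows immediately from $\omega_{vi}\ge 0$, and the maximum-principle step does yield $2m^\star\le\omega_{ii}$ in the case $i\in N(v^\star)$ and $m^\star=0$ in the remaining cases (where you should say explicitly that $m^\star\le 0$ combines with $m^\star\ge 0$ to give $m^\star=0\le\tfrac12\omega_{ii}$ --- a one-line patch). Note, however, that the paper states this lemma without giving any proof at all, implicitly deferring to the cited properties of the fundamental (in-forest) matrix of Chebotarev and Shamis; the natural proof in the spirit of the paper would go through the combinatorial identity $\omega_{ii}=|\calF_{ii}|/|\calF|$ and an injection/counting argument on spanning converging forests (e.g., deleting the out-arc of $i$ maps $\calF$ onto $\calF_{ii}$ with at most $1+d_i$ preimages per forest, which gives the lower bound). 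Your route is genuinely different: it is purely linear-algebraic, self-contained, needs only nonnegativity of $\mathbf{\Omega}$ and strict diagonal dominance of $\II+\LL$, and it isolates exactly where the constant $2$ comes from (at most one out-neighbor of any node equals $i$), with the directed $2$-cycle confirming tightness. That makes it arguably more transparent than an appeal to the forest-matrix literature, at the cost of not reinforcing the spanning-forest viewpoint that the rest of the section is built on.
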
	

\begin{lemma}\label{le-rf}
	For any unweighted digraph $ \calG = (V,E) $, the expected time complexity of Algorithm~\ref{alg-rf} is $ O(n) $.
\end{lemma}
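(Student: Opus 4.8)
The plan is to recognize Algorithm~\ref{alg-rf} as Wilson's algorithm run on the augmented digraph $\calG'$ with $\Delta$ as the unique absorbing root, and to reduce its running time, up to an additive $O(n)$ overhead, to the expected total number of random-walk steps taken in the inner while loop (lines 6--14). First I would account for everything that is \emph{not} a walk step. The initialization (lines 1--3) costs $O(n)$. The second while loop (lines 17--20) follows the \textsc{Next} pointers from $i$ and marks each traversed node \textsc{InForest} exactly once; since every node is added to the forest exactly once across the whole execution, the total cost of these root-assignment passes over all outer iterations is $O(n)$. Assuming \textsc{RandomSuccessor} and \textsc{Rand} run in $O(1)$ time, each iteration of the first while loop is $O(1)$, so the whole claim reduces to showing that the expected number of such iterations, summed over all starting nodes, is $O(n)$.

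Next I would identify the walk explicitly. Before absorption at $\Delta$, the walk on $V$ has transition matrix $\mathbf{P} = (\DD+\II)^{-1}\AA$, since from $u$ it moves to $\Delta$ or to each out-neighbor with probability $1/(1+d_u)$. Its Green's function is
\[
\mathbf{G} = (\II-\mathbf{P})^{-1} = \big((\DD+\II)^{-1}(\II+\LL)\big)^{-1} = (\II+\LL)^{-1}(\DD+\II) = \mathbf{\Omega}(\DD+\II),
\]
so that $G_{ii} = \omega_{ii}(1+d_i)$, which is exactly the expected number of visits the walk started at $i$ makes to $i$ before reaching $\Delta$.

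The crux is the standard running-time identity for Wilson's algorithm: independently of the order in which vertices are processed, the expected total number of random-walk steps equals $\sum_{i\in V} G_{ii}$. I would establish this either by citing Wilson's analysis~\cite{Wi96,WiPr96} or, self-containedly, through the cycle-popping representation, in which each walk step reveals one stacked arrow at the current node, the popped-cycle multiset is order-independent, and the expected number of arrows revealed at node $i$ equals its Green's function diagonal $G_{ii}$. The point that makes the total collapse to $\sum_i G_{ii}$ rather than $\sum_{i,j} G_{ij}$ is precisely the early stopping of later walks when they hit the already-built forest. This identity is the step I expect to be the main obstacle, since it is where the probabilistic structure of loop-erased walks must be invoked carefully; everything before and after it is bookkeeping.

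Finally I would combine the identity with Lemma~\ref{le-omega}: since $\omega_{ii}\le \frac{2}{2+d_i}$, we obtain $G_{ii} = \omega_{ii}(1+d_i) \le \frac{2(1+d_i)}{2+d_i} < 2$ for every node $i$, whence the expected number of walk steps is $\sum_{i\in V} G_{ii} < 2n$. Adding the $O(n)$ overhead for initialization and the root-assignment passes, the expected running time of Algorithm~\ref{alg-rf} is $O(n)$, as claimed.
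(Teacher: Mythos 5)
Your proof is correct and follows essentially the same route as the paper: both reduce the expected running time to the trace $\sum_{i}\omega_{ii}(1+d_i)$ of $\mathbf{\Omega}(\II+\DD)$ (the paper cites Wilson and Marchal's Proposition 1 for this identity, while you derive the Green's function $\mathbf{G}=\mathbf{\Omega}(\DD+\II)$ explicitly and justify the identity via cycle popping), and both then bound each diagonal entry by $2$ using Lemma~\ref{le-omega}. Your version is somewhat more self-contained, additionally accounting for the $O(n)$ non-walk bookkeeping that the paper leaves implicit.
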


\begin{proof}
	Wilson showed that the expected running time of generating a uniform spanning tree of a connected digraph $\calG$ rooted at node $u$ is equal to a weighted average of commute time between the root and the other nodes~\cite{Wi96}. Marchal rewrote this average of commute time in terms of graph matrices in Proposition 1 in~\cite{Ma00}. According to Marchal's result, the expected running time of Algorithm 2 is equal to the trace $\sum_{i=1}^n \omega_{ii}(1+d_i)$ of matrix $\mathbf{\Omega}(\II+\DD)$. Using Lemma~\ref{le-rf}, we have
	$\sum_{i=1}^n \omega_{ii}(1+d_i) \leq \sum_{i=1}^n \frac{2+2d_i}{2+d_i} \leq 2n\left(1-\frac{1}{n+1}\right).$ Thus, the expected time complexity of Algorithm 2 is $O(n)$.
\end{proof}

\subsection{Fast Approximation Algorithm}

Here by using~\eqref{rhonew}, we present an efficient sampling-based algorithm \textsc{Fast} to estimate $\rho_i$ for all $i \in V$ and approximately solve the problem \textsc{OpinionMin} in linear time. 

The ingredient of the approximation algorithm \textsc{Fast} is the variation of Wilson's algorithm introduced in the preceding subsection. The details of algorithm \textsc{Fast} are described in Algorithm~\ref{al-Fast}. First, by applying   Algorithm~\ref{alg-rf} we generate $l$ random spanning converging forests $\phi_1,\phi_2,\ldots,\phi_l$. Then, we compute $\widehat{\rho}_i = \frac{1}{nl}\sum_{j=1}^l|M(\phi_j,i)|$ for all $i \in V$. Note that each of these $l$ spanning converging forests has the same probability of being created from all spanning converging forests in $\calF$~\cite{AvLuGaAl18}. Thus, we have $\mathbb{E} \left(\frac{1}{nl}\sum_{j=1}^l\left| M(\phi_j,i)\right| \right)= \rho_i$, which implies that $\widehat{\rho}_i$ in Algorithm~\ref{al-Fast} is an unbiased estimation of $\rho_i$. Then, 
$\widehat{\rho}_i s_i$ is an unbiased estimation of $\rho_is_i$. Finally, we choose $k$ nodes from $V$ with the top-$k$  values of $\widehat{\rho}_is_i$.

\begin{small}
\begin{algorithm}[t]	\caption{$\textsc{Fast}\left(\calG,k,l\right)$}
	\label{al-Fast}
	\Input{$\calG$ : a digraph 	\\$ k $ : size of the target set \\	$l $ :  number of generated spanning forests \\
	}
	\Output{$ \widehat{T} $ : the target set }
	\textbf{Initialize} : $ \widehat{T} \leftarrow \emptyset $, $ \widehat{\rho}_i \leftarrow 0,\ i=1,2,\ldots,n $\;
	\For{$t = 1$ to $l$}{
		RootIndex $ \leftarrow $ \textsc{RandomForest}($ \calG $)\\
		\For{$ i=1 $ to $ n $}{
			$ u \leftarrow $ RootIndex[$i$]\\
			$\widehat{\rho}_u \leftarrow \widehat{\rho}_u +1$}
	}
	$\widehat{\rho} \leftarrow \widehat{\rho}/nl$ \qquad\qquad\% $\widehat{\rho}=(\widehat{\rho}_1,\widehat{\rho}_2,\ldots,\widehat{\rho}_n)^\top$\\
	\For{$i = 1$ to $k$}{
		$ u \leftarrow \arg \max\limits_{q\in V\setminus \widehat{T}} \widehat{\rho}_q s_q $\\
		$ \widehat{T} \leftarrow \widehat{T} \bigcup \{u\} $\\
	}
	\textbf{return} $ \widehat{T} $ \;
\end{algorithm}
\end{small}

\begin{theorem}\label{th-time}	
	The time complexity of Algorithm 3 is $O(ln)$.
\end{theorem}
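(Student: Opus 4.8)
The plan is to bound the running time of Algorithm~\ref{al-Fast} by decomposing it into its three structural phases and summing their costs. First I would isolate the outer sampling loop (lines 2--6), which executes exactly $l$ times. Within each iteration there are two contributions: a single call to \textsc{RandomForest}($\calG$), and an inner \textbf{for} loop over all $n$ nodes that increments a single counter $\widehat{\rho}_u$ per node. The second of these is manifestly $O(n)$ per iteration since each node is touched once with constant work. For the first, I would invoke Lemma~\ref{le-rf}, which already establishes that the expected time complexity of Algorithm~\ref{alg-rf} is $O(n)$. Hence each pass through the outer loop costs $O(n)$ in expectation, and the entire loop runs in expected time $O(ln)$.

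Next I would account for the remaining work outside the sampling loop. The normalization step $\widehat{\rho}\leftarrow \widehat{\rho}/nl$ scales an $n$-vector and therefore costs $O(n)$. The final selection loop (lines 8--10) runs $k$ times, and in each iteration it performs an $\arg\max$ over the at most $n$ candidate values $\widehat{\rho}_q s_q$ for $q\in V\setminus\widehat{T}$, which takes $O(n)$ time per iteration; the total cost of this phase is thus $O(kn)$. Since $k\ll n$ by the problem's standing assumption, this term is dominated by $O(ln)$ whenever $l$ is at least a constant, and in any case $O(kn)$ is absorbed into $O(ln)$ under the usual convention that $l\ge k$ or by treating $k$ as the smaller parameter.

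Finally I would collect the three bounds and add them: the sampling loop contributes $O(ln)$, the normalization contributes $O(n)$, and the selection loop contributes $O(kn)$, giving a total of $O(ln + n + kn) = O(ln)$ in expectation. I expect the only genuine subtlety, and hence the step worth stating carefully, to be the reliance on Lemma~\ref{le-rf}: the $O(n)$ per-forest cost is an \emph{expected} bound on the loop-erased random walk, so the overall $O(ln)$ is an expected-time complexity rather than a worst-case one. The remaining pieces are routine counting of constant-work operations, so the main obstacle is simply to state explicitly that the forest-generation cost is imported from Lemma~\ref{le-rf} and that the resulting complexity is understood in expectation.
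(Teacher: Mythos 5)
Your proof is correct and follows the only natural route, which is the one the paper intends: charge each of the $l$ iterations an expected $O(n)$ via Lemma~\ref{le-rf} plus $O(n)$ for the counting pass, and absorb the $O(n)$ normalization and $O(kn)$ selection under the standing assumption $k\ll n$. You are also right to flag that the bound is an \emph{expected}-time complexity inherited from the loop-erased random walk analysis; that caveat applies equally to the paper's stated $O(ln)$.
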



Running Algorithm 2 requires determining the number of sampling $ l $, which determines the accuracy of $\widehat{\rho}_is_i$ as an approximation of $\rho_is_i $. In general, the larger the value of $l$, the more accurate the estimation of $\widehat{\rho}_is_i$ to $ \rho_is_i$. Next, we bound the number $l$ of required samplings of spanning converging forests to guarantee a desired estimation precision of $\widehat{\rho}_is_i$ by applying the Hoeffding's inequality~\cite{Ho63}.


We now demonstrate that with a proper choice of $l$, $\widehat{\rho}_is_i$ as an estimator of $\rho_is_i$ has an approximation guarantee for all $i\in V$. Specifically, we establish an $(\epsilon,\delta)$-approximation of $\widehat{\rho}_is_i$: for any small parameters $\epsilon>0$ and $\delta>0$, the approximation error $\widehat{\rho}_is_i$  is bounded by $\epsilon$ with probability  at least $1-\delta$. Theorem~\ref{th-Fast} shows how to properly choose $l$ so that $\widehat{\rho}_is_i$  is an $(\epsilon,\delta)$-approximation of $\rho_is_i$.

\begin{figure*}[t]
	\centering
	\includegraphics[width=1.9\columnwidth]{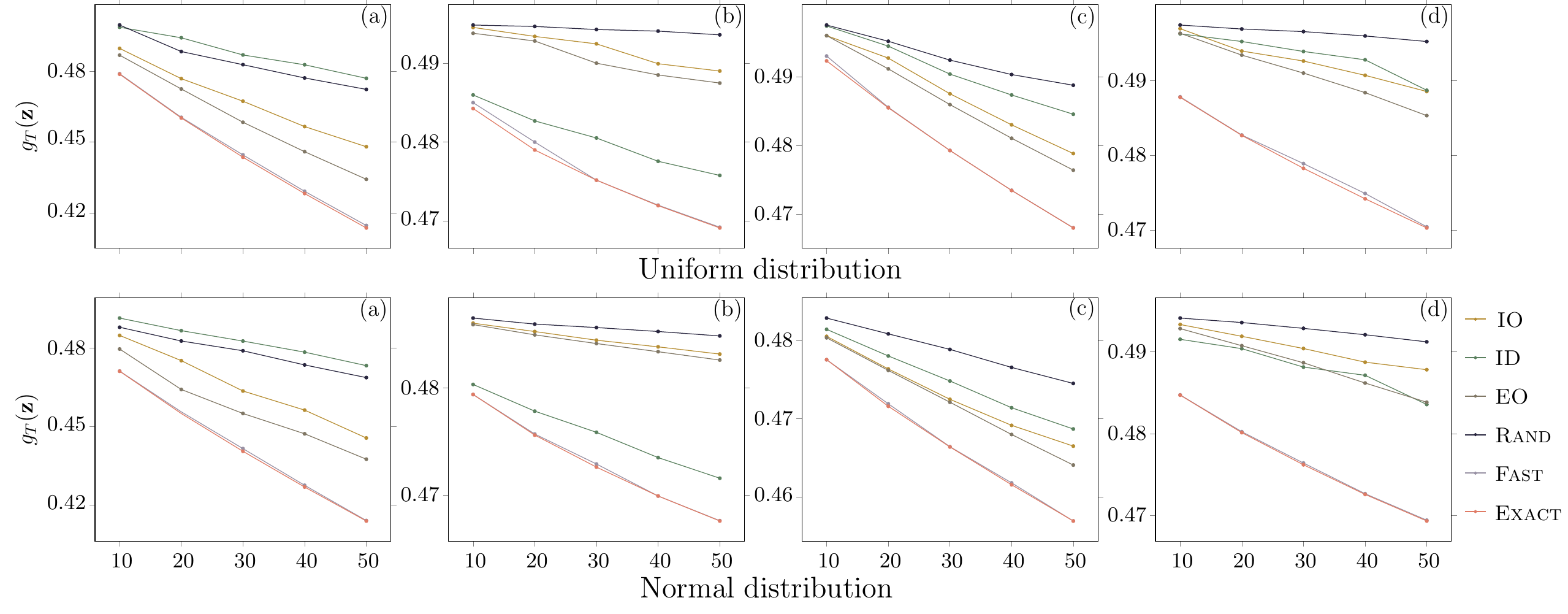}
	\caption{Average of equilibrium expressed opinions for our two algorithms \textsc{Exact}  and \textsc{Fast}, and four baseline heuristics \textsc{Random}  (Rand), \textsc{In-degree} (ID), \textsc{Internal opinion} (IO), and \textsc{Expressed opinion} (EO), on four directed real networks: (a) Filmtrust, (b) Dblp, (c) Humanproteins, and (d) P2p-Gnutella08.}\label{effectiveness}	
\end{figure*}

\begin{theorem}\label{th-Fast}
	For any  $ \epsilon>0 $  and $ \delta\in(0,1) $, if $l$ is chosen obeying   $ l =\left \lceil \frac{1}{2\epsilon^2}\ln{\frac{2}{\delta} }  \right \rceil   $, then for any  $i \in V$,  $ \mathbb{P}\left\{ |\widehat{\rho}[i]s_i- \rho_is_i| >\epsilon \right\} \leq \delta $.
\end{theorem}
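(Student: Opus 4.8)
The plan is to recognize $\widehat{\rho}_i$ as the empirical mean of i.i.d.\ bounded random variables and then invoke Hoeffding's inequality directly. First I would fix a node $i$ and, for each of the $l$ independently generated spanning converging forests $\phi_1,\ldots,\phi_l$, define the random variable $X_j = \frac{1}{n}|M(\phi_j,i)|$. Since each $\phi_j$ is drawn uniformly and independently from $\calF$, the variables $X_1,\ldots,X_l$ are i.i.d.; and because $|M(\phi_j,i)|$ counts the nodes in a single converging tree of a spanning forest on $n$ nodes, we have $0 \le |M(\phi_j,i)| \le n$, so that $X_j \in [0,1]$. By the unbiasedness already established through Eq.~\eqref{rhonew}, $\mathbb{E}[X_j] = \rho_i$, and by construction $\widehat{\rho}_i = \frac{1}{l}\sum_{j=1}^l X_j$ is exactly their average.

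Next I would discard the opinion factor $s_i$. Because $s_i \in [0,1]$, we have $|\widehat{\rho}_i s_i - \rho_i s_i| = s_i\,|\widehat{\rho}_i - \rho_i| \le |\widehat{\rho}_i - \rho_i|$, so the event $\{|\widehat{\rho}_i s_i - \rho_i s_i| > \epsilon\}$ is contained in $\{|\widehat{\rho}_i - \rho_i| > \epsilon\}$, and it suffices to bound the probability of the latter. Applying Hoeffding's inequality to the $l$ i.i.d.\ variables $X_j$ supported on $[0,1]$, where the range sum equals $\sum_{j=1}^l 1^2 = l$, yields
\begin{equation*}
\mathbb{P}\left\{ |\widehat{\rho}_i - \rho_i| > \epsilon \right\} \le 2\exp\left(-2 l \epsilon^2\right).
\end{equation*}

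Finally I would solve for $l$: the requirement $2\exp(-2l\epsilon^2) \le \delta$ is equivalent to $l \ge \frac{1}{2\epsilon^2}\ln\frac{2}{\delta}$, so choosing $l = \left\lceil \frac{1}{2\epsilon^2}\ln\frac{2}{\delta} \right\rceil$ guarantees $\mathbb{P}\{|\widehat{\rho}_i s_i - \rho_i s_i| > \epsilon\} \le \delta$ for the fixed $i$; since $i$ was arbitrary, the bound holds for every node. I do not expect a serious obstacle, as this is essentially a textbook concentration argument. The only points requiring care are the two boundedness observations — that $X_j \in [0,1]$, which fixes the constant $\sum_j(b_j-a_j)^2 = l$ entering Hoeffding's bound, and that $s_i \le 1$, which lets us drop the opinion factor — together with noting that the claim is pointwise in $i$, so no union bound over the $n$ nodes is needed.
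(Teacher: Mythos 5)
Your proof is correct and follows exactly the route the paper intends: the paper states that the bound is obtained "by applying the Hoeffding's inequality" to the i.i.d.\ uniform forest samples, and your argument — identifying $\widehat{\rho}_i$ as the mean of the $[0,1]$-valued variables $\frac{1}{n}|M(\phi_j,i)|$ with mean $\rho_i$, dropping the factor $s_i\le 1$, and solving $2\exp(-2l\epsilon^2)\le\delta$ for $l$ — is precisely that computation. No gaps.
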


Recall that our problem aims to determine the optimal set $T$, which consists of $k$ nodes with the largest $\rho_is_i$. To avoid calculating $\rho_i$ directly, we propose a fast algorithm (Algorithm~\ref{al-Fast}), which returns a set $\widehat{T}$ containing top $k$ nodes of the highest $\hat{\rho}[i]s_i$.  Based on the result of Theorem~\ref{th-Fast}, we can get a union bound between $g_{\widehat{T}}(\boldsymbol{\mathit{z}})$ and $g_{T}(\boldsymbol{\mathit{z}})$, as stated in the following theorem.

\begin{theorem}
    For given  parameters $k, \epsilon, \delta$, if $l$ is chosen according to Theorem~\ref{th-Fast}, the inequality $|g_{\widehat{T}} (\boldsymbol{\mathit{z}})-g_{T}(\boldsymbol{\mathit{z}})|<2k\epsilon $ holds with high probability. 
\end{theorem}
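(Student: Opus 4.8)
The plan is to transfer the already-proved per-coordinate estimation guarantee of Theorem~\ref{th-Fast} to the objective value through the selection rule. First I would put the objective into additive form. As observed in the proof of Theorem~\ref{Th-IOMi}, zeroing the internal opinions of a set $U$ leaves the structure centralities unchanged, so $g_U(\zz) = g(\zz) - \sum_{i \in U}\rho_i s_i$. Hence
\begin{equation*}
g_{\widehat{T}}(\zz) - g_T(\zz) = \sum_{i \in T}\rho_i s_i - \sum_{i \in \widehat{T}}\rho_i s_i \ge 0,
\end{equation*}
where nonnegativity holds because $T = \arg\max_{|U|=k}\sum_{i\in U}\rho_i s_i$ by Theorem~\ref{Th-IOMi}. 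It therefore suffices to upper bound this nonnegative gap by $2k\epsilon$.

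The key is a comparison that plays the two optimalities against each other: $\widehat{T}$ maximizes the \emph{estimated} score $\sum_{i\in U}\widehat{\rho}_i s_i$, while $T$ maximizes the \emph{true} score $\sum_{i\in U}\rho_i s_i$. Adding and subtracting $\sum_{i \in T}\widehat{\rho}_i s_i$ and $\sum_{i \in \widehat{T}}\widehat{\rho}_i s_i$ splits the gap into three pieces; the middle piece $\sum_{i \in T}\widehat{\rho}_i s_i - \sum_{i \in \widehat{T}}\widehat{\rho}_i s_i$ is nonpositive, since Algorithm~\ref{al-Fast} selects $\widehat{T}$ as the top-$k$ nodes under $\widehat{\rho}_i s_i$. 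Dropping this nonpositive term leaves
\begin{equation*}
g_{\widehat{T}}(\zz) - g_T(\zz) \le \sum_{i \in T}\lvert \rho_i s_i - \widehat{\rho}_i s_i\rvert + \sum_{i \in \widehat{T}}\lvert \widehat{\rho}_i s_i - \rho_i s_i\rvert,
\end{equation*}
a sum of $2k$ per-node estimation errors.

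Next I would invoke Theorem~\ref{th-Fast} to control these errors simultaneously. With $l$ chosen as prescribed, each event $\{\lvert\widehat{\rho}_i s_i - \rho_i s_i\rvert > \epsilon\}$ has probability at most $\delta$; by a union bound over all $n$ nodes, the event $\mathcal{E}$ that $\lvert\widehat{\rho}_i s_i - \rho_i s_i\rvert \le \epsilon$ holds for every $i \in V$ has probability at least $1 - n\delta$. On $\mathcal{E}$, each of the $2k$ summands in the last display is at most $\epsilon$, so the gap is at most $k\epsilon + k\epsilon = 2k\epsilon$. This yields $\lvert g_{\widehat{T}}(\zz) - g_T(\zz)\rvert \le 2k\epsilon$ on an event of probability at least $1 - n\delta$, which is the asserted high-probability bound.

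The step needing the most care is this union bound, precisely because $\widehat{T}$ is random and determined by the very estimates being bounded. One cannot union only over the $2k$ indices of $T \cup \widehat{T}$ as if they were fixed, since $\widehat{T}$ is chosen after the estimates are realized; the clean fix is to demand $\epsilon$-accuracy at all $n$ coordinates at once, so that the bound holds uniformly over whatever $\widehat{T}$ is returned. This is what turns the per-node failure probability $\delta$ into the global $n\delta$ implicit in ``with high probability''; if a prescribed confidence $1-\delta'$ is wanted, one simply runs Theorem~\ref{th-Fast} with $\delta'/n$ in place of $\delta$, inflating $l$ by only an additive $O(\log n)$ factor.
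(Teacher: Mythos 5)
Your proof is correct and follows essentially the same route as the paper's: write the gap as $\sum_{i\in T}\rho_i s_i-\sum_{i\in \widehat{T}}\rho_i s_i\ge 0$, insert the estimated scores, use the optimality of $\widehat{T}$ under $\widehat{\rho}_i s_i$ to discard the middle term, and pay $\epsilon$ for each of the $2k$ remaining per-node errors. Your explicit union bound over all $n$ coordinates (yielding probability at least $1-n\delta$ rather than an unquantified ``high probability'') is actually more careful than the paper's proof, which conditions on the all-coordinates event without accounting for its probability.
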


\begin{proof}
	According to Theorem 5.4, we suppose now inequalities $|\widehat{\rho}[i]s_i- \rho_is_i| >\epsilon$ hold for any $i\in V$.
	Since the nodes in set $ T $ have the top value of $ \rho_is_i $, we have
	$$g_{\widehat{T}}(\boldsymbol{\mathit{z}})-g_{T}(\boldsymbol{\mathit{z}}) = \sum_{i\notin \widehat{T}}\rho_is_i-\sum_{i\notin T}\rho_is_i=\sum_{i\in {T}}\rho_is_i-\sum_{i\in \widehat{T}}\rho_is_i \geq 0.$$
	By Theorem 5.4, one obtains
	\begin{align*}
		&\quad g_{\widehat{T}}(\boldsymbol{\mathit{z}})-g_{T}(\boldsymbol{\mathit{z}})\leq \sum_{i\in {T}}\widehat{\rho}[i]s_i-\sum_{i\in \widehat{T}}\rho_is_i  +k\epsilon  \\& \leq \sum_{i\in \widehat{T}}\widehat{\rho}[i]s_i-\sum_{i\in \widehat{T}}\rho_is_i  +k\epsilon \leq 2k\epsilon,
	\end{align*} which completes the proof.
\end{proof}	

Therefore, for any fixed $k$, the number of samples does not depend on $n$.

\section{Experiments}

In this section, we conduct extensive experiments on various real-life directed networks, in order to evaluate the performance of our two algorithms \textsc{Exact} and \textsc{Fast} in terms of  effectiveness and efficiency. The data sets of selected real networks are publicly available in the KONECT~\cite{Ku13} and SNAP~\cite{LeSo16}, the detailed information of which is presented in the first three columns of Table~\ref{NetInf}. In the dataset networks, the number $n$ of nodes ranges from about 1 thousand to 24 million, and the number $m$ of directed edges ranges from about 2 thousand to 58 million. All our experiments are programmed in Julia using a single thread, and are run on a machine equipped with 4.2 GHz Intel i7-7700 CPU and 32GB of main memory. 

\subsection{Effectiveness}

We first compare the effectiveness of our algorithms \textsc{Exact} and \textsc{Fast} with four baseline schemes for node selection: \textsc{Random}, \textsc{In-degree}, \textsc{Internal opinion}, and \textsc{Expressed opinion}. \textsc{Random} selects $k$ nodes at random. \textsc{In-degree} chooses $k$ nodes with the largest in-degree, since a node with a high in-degree may has a strong influence on other nodes~\cite{XuHuWu20}. For \textsc{Internal opinion} and \textsc{Expressed opinion}, they have been used in~\cite{ GiTeTs13}. \textsc{Internal opinion} returns $ k $ nodes with the largest original internal opinions, while \textsc{Expressed opinion} selects $ k $ nodes with the largest equilibrium expressed opinions in the FJ model corresponding to the original internal opinion vector. 

\begin{table*}[t]
\fontsize{9.8pt}{10.8pt} \selectfont
        \centering
	\resizebox{0.95\textwidth}{!}{
	\begin{tabular}{llllllllll}
		\hline
		\multirow{2}{*}{Network} & \multirow{2}{*}{Nodes} & \multirow{2}{*}{Arcs} & \multicolumn{4}{l}{Running time ($s$) for \textsc{Exact} and \textsc{Fast}} & \multicolumn{3}{l}{Relative error ($\times 10^{-3}$)} \\ \cline{4-10} 
		&                        &                        & Optimal         & $ l = 500  $      & $ l = 1000  $      & $ l=2000   $       & $ l = 500  $     & $ l = 1000  $     & $ l=2000  $    \\ \hline
		Filmtrust                & 874                    & 1,853                  & 0.023         & 0.019         & 0.037          & 0.042          & 1.08         & 0.55          & 0.11       \\
	Humanproteins            & 2,239                  & 6,452                  & 0.251         & 0.032         & 0.039          & 0.056          & 0.19        & 0.16          & 0.03       \\
	Adolescenthealth         & 2,539                  & 12,969                 & 0.354         & 0.034         & 0.068          & 0.134          & 0.72         & 0.59         & 0.09       \\
	P2p-Gnutella08           & 6,301                  & 20,777                 & 4.825         & 0.067         & 0.123          & 0.244          & 0.83        & 0.64        & 0.04      \\
	Wiki-Vote                & 7,115                  & 103,689                & 7.405         & 0.078         & 0.156          & 0.312          & 1.13         & 0.87         & 0.13       \\
	Dblp                     & 12,590                 & 49,744                 & 40.870        & 0.106         & 0.210          & 0.419          & 0.38         & 0.13          & 0.08       \\
	Wikipedialinks           & 17,649                 & 296,918                & 110.744       & 0.248         & 0.477          & 0.932          & 1.32         & 0.97          & 0.06       \\
	Twitterlist              & 23,370                 & 33,101                 & 259.484       & 0.127         & 0.250          & 0.498          & 0.25         & 0.12         & 0.01     \\
	P2p-Gnutella31           & 62,586                 & 147,892                & -             & 0.628         & 1.236          & 2.550          & -            & -             & -          \\
	Soc-Epinions             & 75,879                 & 508,837                & -             & 1.260         & 2.501          & 4.973          & -            & -             & -          \\
	Email-EuAll              & 265,009                & 418,956                & -             & 3.016         & 5.929          & 11.844         & -            & -             & -          \\
	Stanford                 & 281,903                & 2,312,500              & -             & 7.474         & 14.908         & 29.815         & -            & -             & -          \\
	NotreDame                & 325,729                & 1,469,680              & -             & 4.823         & 9.574          & 19.232         & -            & -             & -          \\
	BerkStan                 & 685,230                & 7,600,600              & -             & 14.021        & 28.009         & 56.130         & -            & -             & -          \\
	Google                   & 875,713                & 5,105,040              & -             & 26.583        & 53.655         & 106.005        & -            & -             & -          \\
	NorthwestUSA             & 1,207,940              & 2,820,770              & -             & 27.758        & 55.509         & 110.410        & -            & -             & -          \\
	WikiTalk                 & 2,394,380              & 5,021,410              & -             & 20.277        & 37.622         & 75.105         & -            & -             & -          \\
	Greatlakes               & 2,758,120              & 6,794,810              & -             & 64.391        & 128.167        & 255.147        & -            & -             & -          \\
	FullUSA                  & 23,947,300             & 57,708,600             & -             & 559.147       & 1116.550       & 2230.770       & -            & -             & -          \\ \hline
	\end{tabular}}
 \caption{The running time  and the relative error of Algorithms~\ref{al-optimal} and~\ref{al-Fast} on real networks for various sampling number $ l $. }\label{NetInf}
\end{table*}

In our experiment, the number $l$ of samplings in algorithm \textsc{Fast} is set be $500 $. For each node $i$, its internal opinion $s_i$ is generated uniformly in the interval $[0,1]$. For each real network, we first calculate the equilibrium expressed opinions of all nodes and their average opinion for the original internal opinions. Then, using our algorithms \textsc{Exact} and \textsc{Fast} and the four baseline strategies, we select $ k = 10,20,30,40,50 $ nodes and change their internal opinions to 0, and recompute the average expressed opinion associated with the modified internal opinions. We also execute experiments for other distributions of internal opinions. For example, we consider the case that the internal opinions follow a normal distribution with mean $ 0 $ and variance $ 1 $. For this case, we perform a linear transformation, mapping the internal opinions into interval $[0,1]$, so that the smallest internal opinion is mapped to 0, while the largest internal opinion corresponds to 1. 
 As can be seen from Figure~\ref{effectiveness},  for each network algorithm \textsc{Fast} always returns a result close to the optimal solution corresponding to algorithm \textsc{Exact} for both  uniform distribution and standardized normal distribution, outperforming the four other baseline strategies.

For the cases that internal opinions obey power-law distribution or exponential distribution, here we do not report the results since they are similar to that observed in Figure~\ref{effectiveness}.

\subsection{Efficiency and Scalability}

As shown above, algorithm \textsc{Fast} has similar effectiveness to that of algorithm \textsc{Exact}. Below we will show that algorithm \textsc{Fast} is more efficient than algorithm \textsc{Exact}. To this end, in Table~\ref{NetInf} we compare the performance of algorithms \textsc{Exact} and \textsc{Fast}. First, we compare the running time of the two algorithms on the real-life directed networks listed in Table~\ref{NetInf}. For our experiment, the internal opinion of all nodes in each network obeys uniform distribution from $[0,1]$, $k$ is equal to 50, and $l$ is chosen to be 500, 1000, and 2000. As shown in Table~\ref{NetInf}, \textsc{Fast} is significantly faster than \textsc{Exact} for all $l$, which becomes more obvious when the number of nodes increases. For example, \textsc{Exact} fails to run on the last 11 networks in Table~\ref{NetInf}, due to time and memory limitations. In contrast, \textsc{Fast} still works well in these networks. Particularly, algorithm \textsc{Fast} is scalable to massive networks with more than twenty million nodes, e.g., FullUSA with over $ 2.9\times 10^7$ nodes.


Table~\ref{NetInf} also reports quantitative comparison of the effectiveness between algorithms \textsc{Exact} and \textsc{Fast}. Let $g_T $ and $g_{\widehat{T}}$ denote the average opinion obtained, respectively, by algorithms \textsc{Exact} and \textsc{Fast}, and let $ \gamma = |g_T-g_{\widehat{T}} |/g_T  $ be the relative error of $g_{\widehat{T}}$ with respect to $ g_T$. The last three columns of Table~\ref{NetInf} present the relative errors for different real networks and various numbers $l$ of samplings. From the results, we can see that for all networks and different $l$, the relative error $\gamma$ is negligible, with the largest value being less than $0.0014$. Moreover, for each network, $\gamma$ is decreased when $l$ increases. This again indicates that the results returned by \textsc{Fast} are very close to those corresponding to \textsc{Exact}. Therefore, algorithm \textsc{Fast} is both effective and efficient, and scales to massive graphs.


\section{Conclusions}

In this paper, we studied how to optimize social opinions based on the Friedkin-Johnsen (FJ) model in an unweighted directed social network with $n$ nodes and $ m $ edges, where the internal opinion $s_i$, $i =1,2,\cdots,n$, of every node $i$ is in interval $[0, 1]$. We concentrated on the problem of minimizing the average of equilibrium opinions by selecting a set $U$ of $ k\ll n $ nodes and modifying their internal opinions to 0. Although the problem seems combinatorial, we proved that there is an algorithm \textsc{Exact} solving it in $O(n^3)$ time, which returns the $k$ optimal nodes with the top $k$ values of $\rho_i s_i $, $ i =1,\cdots,n $, where $\rho_i$ is the structure centrality of node $i$.

Although algorithm \textsc{Exact} avoids the naïve enumeration of all $\tbinom{n}{k}$ cases for set $U$, it is not applicable to large graphs. To make up for this deficiency, we proposed a fast algorithm for the problem. To this end, we provided an interpretation of $\rho_i$ in terms of rooted spanning converging forests, and designed a fast sampling algorithm \textsc{Fast} to estimate $\rho_i$ for all nodes by using a variant of Wilson's Algorithm. The algorithm simultaneously returns $k$ nodes with largest values of $\rho_i s_i $ in $ O(ln) $ time, where $ l $ denotes the number of samplings. Finally, we performed experiments on many real directed networks of different sizes to demonstrate the performance of our algorithms. The results show that the effectiveness of algorithm \textsc{Fast} is comparable to that of algorithm \textsc{Exact}, both of which are better than the baseline algorithms. Furthermore, relative to \textsc{Exact}, \textsc{Fast} is more efficient, since is  \textsc{Fast} is scalable to massive graphs with over twenty million nodes, while \textsc{Exact} only applies to graphs with less than tens of thousands of nodes. It is worth mentioning that it is easy to extend or modify  our algorithm to weighed digraphs and apply it to solve other optimization problems for opinion dynamics.

\section*{Acknowledgements}

Zhongzhi Zhang is the corresponding author.
This work was supported by the Shanghai
Municipal Science and Technology Major Project (No. 2018SHZDZX01), the National Natural Science
Foundation of China (Nos. 61872093 and U20B2051), ZJ Lab, and Shanghai Center
for Brain Science and Brain-Inspired Technology.

\fontsize{9.8pt}{10.8pt} \selectfont
\bibliography{opinion}

\end{document}